\documentclass[acmsmall,screen,nonacm,review=false,timestamp=false]{acmart}

\usepackage{microtype}
\usepackage{graphicx}
\usepackage{subfigure}
\usepackage{booktabs} %
\usepackage{algorithm}
\usepackage{algorithmic}
\usepackage{amsmath}

\DeclareMathOperator*{\argmin}{arg\,min}
\usepackage{hyperref}
\usepackage{lipsum}
\usepackage{xcolor}

\usepackage{mathtools}
\usepackage{amsthm}

\usepackage[capitalize,noabbrev]{cleveref}

\theoremstyle{plain}
\newtheorem{theorem}{Theorem}[section]

\newtheorem{lemma}[theorem]{Lemma}
\newtheorem{corollary}[theorem]{Corollary}
\theoremstyle{definition}

\newtheorem{assumption}[theorem]{Assumption}
\theoremstyle{remark}

\AtBeginDocument{%
  }

\setcopyright{acmlicensed}
\copyrightyear{2018}
\acmYear{2018}
\acmDOI{XXXXXXX.XXXXXXX}
\acmConference[Conference acronym 'XX]{Make sure to enter the correct
  conference title from your rights confirmation email}{June 03--05,
  2018}{Woodstock, NY}
\acmISBN{978-1-4503-XXXX-X/2018/06}

\begin{document}

\title{Counterfactual simulations for large scale systems with burnout variables}

\author{Benjamin Heymann}

\email{b.heymann@criteo.com}
\orcid{0000-0002-0318-5333}
\affiliation{%
  \institution{Joint team Fairplay, ENSAE, and Criteo AI LAB}
  \city{Paris}
  \country{France}
}

\renewcommand{\shortauthors}{Heymann}

\begin{abstract}
We consider large-scale systems influenced by burnout variables—state variables that start active, shape dynamics, and irreversibly deactivate once certain conditions are met.
Simulating “what-if” scenarios in such systems is computationally demanding, as alternative trajectories often require sequential processing, which does not scale very well. This challenge arises in settings like online advertising, because of campaigns budgets, complicating counterfactual analysis despite rich data availability.

We introduce a new type of algorithms based on what we refer to as  uncertainty relaxation, that enables efficient parallel computation, significantly improving scalability for counterfactual estimation in systems with burnout variables.
\end{abstract}

\begin{CCSXML}
<ccs2012>
   <concept>
       <concept_id>10010147.10010169.10010170.10003817</concept_id>
       <concept_desc>Computing methodologies~MapReduce algorithms</concept_desc>
       <concept_significance>500</concept_significance>
       </concept>
   <concept>
       <concept_id>10010405.10003550.10003596</concept_id>
       <concept_desc>Applied computing~Online auctions</concept_desc>
       <concept_significance>500</concept_significance>
       </concept>
   <concept>
       <concept_id>10010405.10003550.10003552</concept_id>
       <concept_desc>Applied computing~E-commerce infrastructure</concept_desc>
       <concept_significance>500</concept_significance>
       </concept>
 </ccs2012>
\end{CCSXML}

\ccsdesc[500]{Computing methodologies~MapReduce algorithms}
\ccsdesc[500]{Applied computing~Online auctions}
\ccsdesc[500]{Applied computing~E-commerce infrastructure}
\keywords{burnout variables, map reduce, causality, advertising, auctions, simulation}
\maketitle

\section{Introduction}
 This paper  introduces scalable estimation methods for large-scale systems influenced by burnout variables, that is, state variables that start active, shape dynamics, and irreversibly deactivate once certain conditions are met. 
For expository purpose, and because it was our initial motivation, 
we will be  focusing on  applications for the field of digital advertising.

So the setting is as follows: retail media and search advertising platforms enable advertisers to display ads to users through real-time auctions, where each opportunity to show an ad is auctioned as it arises. Advertisers participate in these auctions via campaigns, which define parameters such as target audience, budget, and cost-per-click (CPC). The platform uses these inputs to determine allocations and payments, often leveraging machine learning predictions—such as estimated click probabilities—to optimize outcomes. The specific allocations and payments also depend on the platform design, such as the choice of auction rule and bid multipliers.  

A key challenge in this setting is budget constraints, which impose a cap on campaign spending and pause bidding once the budget is depleted. These constraints create budget coupling effects, where a change in one campaign’s bidding behavior can indirectly influence other campaigns by altering competition dynamics. This interdependence complicates counterfactual estimation, as modifying a single campaign’s strategy can trigger cascading effects across the platform.

While a sequential replay of past auctions could, in principle, enables the 
simulation of alternative platform configurations, this approach does not scale, and we need alternatives leveraging parallel computing frameworks-- such as MapReduce-- that offers efficient way to process large-scale auction data. Our work focuses on developing a scalable approach to counterfactual estimation that accounts for budget constraints while maintaining computational efficiency.

\paragraph{Advertising platforms}

Advertising platforms play a fundamental role in the global economy. They serve as the foundation for major corporate revenue models, as seen in companies like Google, Microsoft, Amazon, and Meta. These platforms connect advertisers with content providers—whether search engines, newspapers, retailers, or social networks—and concentrate significant economic and political influence.
Understanding the impact of design choices in these platforms is therefore critical, as such decisions can have far-reaching economic, business, and societal consequences. While substantial research has focused on optimizing platform performance, relatively little is known about how to leverage real data to evaluate  alternative design or parameter changes, such as modifications to cost-per-click (CPC) pricing or auction rules. This stands in contrast to real-time bidding (RTB)~\cite{choi2020online}, where a well-developed causal inference literature exists. However, counterfactual estimation on advertising platforms presents  distinct challenges: on the one hand, they have access to significantly more information than individual bidders, but on the other hand they need to account for more complex dynamics because they control the full system.

\paragraph{Search}
A search advertising platform is a system that facilitates the auction-based placement of ads within search engine results. When a user enters a query, the platform runs an auction in real time to determine which ads to display and in what order. Advertisers bid on keywords relevant to their products or services, and their ad placements depend on factors such as bid amounts, ad relevance, and expected user engagement. The most common pricing model is cost-per-click (CPC), where advertisers pay only when users click on their ads. Leading search ad platforms, such as Google Ads and Microsoft Advertising, optimize these auctions to maximize revenue while balancing user experience and advertiser value. These platforms play a crucial role in digital marketing, driving targeted traffic to businesses while generating significant revenue for search engines.
\paragraph{Retail media}

Retail media is the digital evolution of traditional in-store advertising, where brands have long sought to influence shoppers at the moment of purchase. Classic examples include in-store promotions, product demonstrations, and free samples—like offering cheese tastings in supermarkets—to capture consumer attention and drive immediate sales. While these strategies relied on physical store space, modern retail media brings the same concept online, where digital marketplaces like Amazon and Target serve as the new storefronts.

Today, retail media is a multi-billion dollar industry where advertisers compete for ad placements through real-time auctions. Brands, agencies, and marketers set bid prices and budgets to display their ads on key retail platform locations, such as product pages, search results, and category pages. When a user visits the platform, an auction determines which ads are shown, often using first-price or second-price auction models. Once an advertiser's budget is depleted, their campaign is paused until funds are replenished.

For marketplace operators, designing these auction systems is a complex challenge with major business implications. Decisions about auction rules, pricing models, and ad placements impact revenue, advertiser satisfaction, and overall platform efficiency. Counterfactual estimation—simulating how different auction designs would perform—helps operators evaluate potential changes without real-world risks. However, running full historical replays of past auctions is not computationally reasonable most of the time, making efficient estimation methods essential for optimizing retail media platforms.

\paragraph{Contributions}
Our main contribution is the analysis of a parallelizable algorithm for counterfactual estimation in advertising platforms. While the algorithm itself can be executed in parallel, a naive approach requires multiple calls to obtain reliable estimates, which can be computationally expensive. To address this, we mention a strategy that reduces the number of required executions, enabling the algorithm to be run only once.

For the analysis of the method, we introduce structural assumptions on the data-generating process that are both realistic and analytically useful. We then provide a theoretical analysis of the algorithm’s performance, establishing concentration guarantees for its estimates. A key insight from our results is that the primary challenge lies in accurately estimating capping-out times (i.e., when budgets are depleted).

To tackle this issue, we introduce \textsc{sort2aggregate} for estimating capping-out times efficiently. Our experimental results show that \textsc{sort2aggregate} significantly improves estimation accuracy while maintaining computational efficiency.

\section{Related work}
\paragraph{Budgets and Capping out times}
Advertising campaigns typically have budget caps~\cite{abrams2008ad,aggarwal2024auto,balseiro2020dual,mehta2007adwords,10.1145/3589335.3648331}, meaning a campaign is paused once its spending exceeds the allocated budget. While this constraint ensures advertisers do not overspend, it introduces a challenge for counterfactual estimation. Since all campaigns on the platform interact dynamically, a change in bidding behavior for one campaign can indirectly affect others through budget constraints. This coupling makes it difficult to predict the isolated impact of a design change, as altering one campaign’s strategy may shift competition, reallocating impressions and accelerating or delaying when other campaigns exhaust their budgets.

\paragraph{Counterfactual estimation}
Counterfactual estimation~\cite{peters2017elements}, rooted in the foundational works of researchers such as Pearl~\cite{pearl1995causal,pearl2009causality}, Neyman and Rubin~\cite{neyman1923application,imbensCausalInferenceStatistics2015}, is a key concept in causal inference and has become a central topic in artificial intelligence. In the context of advertising marketplaces, counterfactual analysis helps quantify the trade-offs between different stakeholders' objectives, such as revenue maximization for the platform, return on investment for advertisers,  and user experience. 

A direct way to estimate causal effects is through A/B testing or randomized techniques such as Inverse Propensity Scoring (IPS). However, relying solely on these methods is limiting. Running multiple A/B tests on the same scope is typically infeasible, as each test requires an adequate duration to achieve sufficient statistical power. Moreover, setting up A/B tests introduces potential pitfalls: human errors in experiment design can lead to misleading results or even disrupt platform stability. Additionally, A/B testing relies on specific statistical assumptions—such as independence between experimental units—that may not hold in complex advertising environments. For instance, evaluating the effect of modifying bid multipliers in an auction is challenging, as such changes can indirectly influence market dynamics and competitors behavior. Similarly, offline policy estimation methods suppose the  candidate policy to be tested not too far from the production policy, which might prevent answering counterfactual questions on macroscopic  changes on the platform.

To overcome these limitations, causal inference methods typically leverage structural assumptions about the data-generating process, allowing for more robust conclusions beyond simple observational correlations. 
Developing accurate and scalable counterfactual models remains a critical challenge, as advertising platforms must balance computational feasibility with statistical reliability.

\paragraph{Causal method in programmatic advertising}

Causal inference methods play a critical role in optimizing advertising strategies~\cite{bottou2013counterfactual}, in particular  in real-time bidding (RTB) marketplaces, where advertisers must assess the impact of repeated user interactions. \cite{bompaire2024fixed,10.1145/3447548.3467280,moriwaki2021real} highlight the importance of accounting for the diminishing marginal effect of repeated ad displays, proposing a causal model to refine bidding strategies by valuing each display individually. 
Supposing the causal effects can be measured, \cite{heymann2024repeated} show the complexities of repeated bidding with causal effects on the user timeline using an optimal control framework. 
A view complementary to ours is taken in~\cite{betlei2024maximizing}, that address the challenge of maximizing policy allocation success in dynamic online advertising systems, reinforcing the need for precise causal estimation. 
Close to our spirit is the work of~\cite{liao2023statistical} that leverages the assumption that a pacing equilibrium is reached in a first price auction.  Under this assumption in our setting, all the campaigns cap out at the end, which greatly simplify  Algorithm~\ref{alg:example}.
Last,  \cite{waisman2024online} develop an online learning perspective.

\paragraph{Computation at scale}%
Massive amounts of data are handled by modern advertising platforms, which require robust frameworks and distributed computing infrastructure so that information can be processed and analyzed efficiently. 
MapReduce, a programming model designed for processing distributed data efficiently, is one of the fundamental paradigms enabling large-scale computation \cite{10.1145/1327452.1327492}. MapReduce breaks down large computations into two key steps: the map step, where data is partitioned and processed in parallel across multiple machines, and the reduce step, where intermediate results are aggregated into a final output. This approach enables frameworks such as Apache Spark and Hadoop to carry out parallel computations over datasets that are too voluminous to be accommodated on a solitary machine, thereby effecting a substantial enhancement in processing speed and scalability.

\section{Model of large scale systems with burnout
variables}
Next,  we introduce an abstraction that could represent the sequence of auction data points of an advertising platform. We use this application to ground this abstraction into a real life scenario. 
We denote by $\mathcal{E}$ a finite set of  data points of size   $N\in \mathbb{N}$ that corresponds to events such as auction. 
We start with the following structuring assumption: 
\begin{assumption}[Random order relaxation]
\label{assumption:random_order}
The realized sequence of auctions, denoted by $(\boldsymbol{e_1}\ldots\boldsymbol{e_N})$  is a random sample without replacement from $\mathcal{E}$.
\end{assumption}
We denote by $\mathcal{F}_n$ the filtration generated by the process $(\boldsymbol{e}_{n})_{n\in[1,N]}$.
The state $s$ of the platform is the level of spend of its campaigns, that is, if $c$ is the index of a campaign in the finite set of campaigns $\mathcal{C}$, and $n\in[1:N]$ is the index of an auction event, then $s^c_n$ is the cumulated  spend of the campaign $c$ after the resolution of the $n$th auction.
If $b=(b^c)_{c\in\mathcal{C}}$ is the vector of  \textit{budgets}, that maps each campaign $c$ to its budget $b^c>0$,  $a_n$ refers to the binary \textit{activation} vector in $\{0,1\}^\mathcal{C}$ defined by, for $c\in\mathcal{C}$,
\begin{align}
    a^c_n =
    \begin{cases}
        1, & \text{if } s_n^c<b \\
        0, & \text{otherwise},
    \end{cases}
\end{align}
 and $f:\mathcal{E}\times \{0,1\}^\mathcal{C}\to\mathbb{R}_{+}^\mathcal{C}$ is a map that, given the event $e\in \mathcal{E}$ and the list of campaigns that are active (encoded   by a binary vector from $\{0,1\}^\mathcal{C}$), outputs the increment of spend of each campaign~\footnote{here we focus on the spend, but the dynamics could encode other quantities of interest}.
We suppose the spend dynamics follow the following rule: 
\begin{align}
    s_0^c &= 0  \quad \forall c\in \mathcal{C},\\
    s_{n}   &= s_{n-1}+ f(\boldsymbol{e}_n ,a_{n-1})\quad \forall n\in [1:N].
\end{align}
 Hence, $f$ can be interpreted as the vector of spending "speed" of the campaigns.
 When a campaign $c\in\mathcal{C}$ is not active, the spending is stopped: $a^c = 0\implies f^c(.,a) = 0 $.
We next state two structuring assumptions on the auction rule $f$.

\begin{assumption}[small individual contribution]
\label{assumption:small-individual-contribution}
    There exists $C>0$ such that for any $e\in\mathcal{E}$, $a\in \{0,1\}^\mathcal{C}$ and $c\in\mathcal{C}$ 
    \begin{align}
        f^c(e,a)< \frac{C}{N}.
    \end{align}
\end{assumption}

Assumption~\ref{assumption:small-individual-contribution}, is reminiscent from the \textit{ad words} literature~\cite{mehta2007adwords}, and implies that the contribution of any individual auction to the campaign spend is small compared to the budgets.

We introduce the following shorthand: for any binary vector $a\in\{0,1\}^\mathcal{C}$ and $c\in\mathcal{C}$ we denote by 
 $a-\{c\}$  the binary vector $\tilde{a}$ such that 
\begin{align}
    \tilde{a}^{c'} =
    \begin{cases}
        a^{c'}, & \text{if } c' \neq c \\
        0, & \text{otherwise}.
    \end{cases}
\end{align}
\begin{assumption}[$(\gamma,\delta,\epsilon)$-smoothness]
\label{assumption:smooth}
There exists $\gamma>0$, $\delta>0$, $\epsilon>0$, such that with a probability greater than $1-\delta$, 
\begin{align}
\label{eq:gamma}
    \sum_{i=m}^n\left[ f^{c'}(\boldsymbol{e}_{i},a-\{c\})-f^{c'}(\boldsymbol{e}_{i},a)\right] \leq \gamma \sum_{i=m}^n[f^{c}(\boldsymbol{e}_{i},a)]+\epsilon,
\end{align}
for any $(c',c)\in\mathcal{C}$, $(n,m)\in[1:N]$, and $a\in [0,1]^\mathcal{C}$.

\end{assumption}
The interpretation of condition~\eqref{eq:gamma} is that the deactivation of a campaign $c$ should not, on average impact any other campaign more than a factor  of  the "speed" of campaign $c$ before deactivation.

In this formulation, the budget may exceed its limit by a small margin, but Assumption~\ref{assumption:small-individual-contribution} ensures that this excess remains at most \( \frac{C}{N} \). The smoothness assumption holds whenever \( \gamma \) is sufficiently large. Typically, in a first-price auction, we have \( \gamma \leq 1 \). However, in settings where multiple campaigns coexist without a dominant one, we argue that \( \gamma \) should scale as \( O\left(\frac{1}{|\mathcal{C}|}\right) \). In a second-price auction, the smoothness assumption rules out cases where a large campaign is disproportionately influenced by a much smaller one. The random order assumption facilitates statistical guarantees on the algorithm’s counterfactual statements, it has been used in the ad words literature to provide competitive ratio on allocation strategies~\cite{kesselheim2014primal,nikhilr.devanurAdwordsProblemOnline2009}. Relevant marketplaces include first-price auctions with or without pacing mechanisms, such as bid scaling and random throttling.

\section{Sequential simulation}

Mostly, $f$ encodes the auction rules of the platform, but it may also include ML inferences that influence the allocation decision. This paper is based on the observation that if, for a hypothetical $\tilde{f}$, given any $e \in \mathcal{E}$ and any $a \in \{0,1\}^{\mathcal{C}}$, we can compute $\tilde{f}(e,a)$, then we should also be able to compute what would have happened had the platform switched from auction rule $f$ to $\tilde{f}$.

The counterfactual estimation relies on the assumption that $\mathcal{E}$ captures all auction-relevant state variables. However, this is an approximation. In practice, some dependencies may be omitted, such as the fact that an advertiser’s valuation for an opportunity may depend on how many banners they have already shown to the user. This assumption effectively rules out long-term dependencies between auction events, which are often neglected in the literature, aside from a few exceptions like \cite{heymann2024repeated,heymann2024pragmatic}.

Additionally, when only the auction rule changes while the ML model remains fixed, $\tilde{f}$ should use the same ML models as those deployed online. This ensures consistency with the observed auction environment. However, if these ML models were trained on outcomes influenced by $f$, using them under $\tilde{f}$ could introduce biases. Our approach assumes that these biases remain negligible.

Since the rule that was used does not affect our result—provided we have access to $\mathcal{E}$—we suppose, without loss of generality, that we are interested in estimating the counterfactual state $s$, which differs from the historical realization when computed under a different $f$. Because $s$ results from a sequential computation, we refer to it as the \textit{sequential simulation}. We will show how this sequential simulation can be efficiently estimated using a \textit{parallel simulation}, enabling large-scale counterfactual analysis on a distributed computing system.

This paper contributes  an algorithm that unlocks such parallelism. 
Here is a hint of the algorithm philosophy. 
Take $T$  a large non-negative integer that represents a time horizon,  and $x_t>0$ for $t\in [T]$ a sequence of positive numbers that can be interpreted as prices. 
Define  $S_T$ sequentially by 
\begin{align*}
    S_1 &= 0 \\
    S_{t+1}&=\min(S_t+x_t,B)
\end{align*}
The definition of $S_T$ implies it can be computed sequentially by first computing $S_1$, then $S_2$, and, keeping on until one reaches $S_T$.
But sequential computations do not scale very well.  The following trivial algorithm circumvents this limit and allows for  parallelization.
\begin{algorithm}[h]
\caption{Trivial Algorithm}
\label{algo:0}
\begin{algorithmic}[1]
\STATE \textbf{Input:} $B$, $(x_t)_{t\in [T]}$
\STATE \textbf{Output:} $\min(B, \sum_{t=1}^{T} x_t)$
\STATE \textbf{return} $\min(B, \sum_{t=1}^{T} x_t)$
\end{algorithmic}
\end{algorithm}
\\
 Algorithm~\ref{algo:0} does not require the sum of the $x_t$ to be made in a specific order.\textbf{ Hence, the summation can be distributed on a cluster of computing machines.}
 We extend this idea to a more general class of system.

\section{Parallel simulation}
The \textbf{Parallel Simulation Algorithm} displayed in Algorithm~\ref{alg:example} aims to efficiently estimate counterfactual auction outcomes by simulating the allocation process in a structured and scalable manner. The algorithm takes as input a set of campaigns $\mathcal{C}$, their corresponding budgets $\{b^c\}_{c \in \mathcal{C}}$, an ordered sequence of auction events $(\boldsymbol{e_1},\ldots,\boldsymbol{e_N})$, and a function $ f(\mathbf{e}, a) $ that encodes the auction mechanism. The algorithm iteratively processes auction events while maintaining budget constraints and tracking campaign allocations in a way that allows parallelization.  

At initialization, the counters for each campaign are set to zero, and all campaigns are assumed to be active. At each iteration, the algorithm estimates the expected value $ F_{i+1} $ of the auction function given the next event in sequence, conditioned on the history of previously observed events. Based on this estimation, the campaign $\hat{c}_{i+1}$ that is expected to deplete its budget the fastest is identified. The next event index $\hat{N}_{i+1}$ is then determined by computing the  number of events that can be processed before the budget of the selected campaign runs out, ensuring that it does not exceed the total number of events $ N $.  

Within this range of events, the algorithm assigns actions $\hat{a}_n$ according to the current active set $\hat{A}_i$, and updates the sequence of accumulated spending $\hat{s}_n$ using the auction function $ f(\boldsymbol{e}_n, \hat{A}_i) $. Once all events in this range have been processed, the campaign $\hat{c}_{i+1}$ is removed from the active set $\hat{A}_i$, indicating that it has exhausted its budget. The iteration index is then incremented, and the loop repeats with the remaining active campaigns.  

This approach enables each iteration of the loop to be executed at scale. However, when the number of campaigns is large, the required number of iterations becomes prohibitively high, limiting the practicality of this algorithm. 

\begin{algorithm}[ht]
\caption{Parallel simulation}
\label{alg:example}
\begin{algorithmic}[1]
\REQUIRE A set of campaigns $\mathcal{C}$, budgets $\{b^c\}_{c \in \mathcal{C}}$, an ordered sequence of events $(\boldsymbol{e_1},\ldots,\boldsymbol{e_N})$, and a function $f(\mathbf{e}, a)$.
\ENSURE Updated sequences $\{\hat{s}_n\}$ and $\{\hat{a}_i\}$.

\STATE \textbf{Initialization:}
\STATE \quad $\hat{s}_0^c \leftarrow 0,\quad \hat{a}_0^c \leftarrow 1,\quad \forall c \in \mathcal{C}$
\STATE \quad $\hat{N}_0 \leftarrow 0, i \leftarrow 0$

\WHILE{$\hat{N}_i<N$ and $\hat{A}_i \neq 0 $}
    \STATE $F_{i+1} \leftarrow \mathbb{E}\bigl[f(\boldsymbol{e}_{\hat{N}_i + 1}, \hat{a}_i) \,\big|\,
    \{\hat{N}_i, \boldsymbol{e}_1, \ldots, \boldsymbol{e}_{\hat{N}_i}\}\bigr]$

    \STATE $\hat{c}_{i+1} \leftarrow \displaystyle \argmin_{c \in \mathcal{C} \,\text{ s.t. }\, \hat{a}_i^c=1}
    \Bigl(\frac{b^c - \hat{s}_{\hat{N}_i}^c}{F_{i+1}^c}\Bigr)$

    \STATE $\hat{N}_{i+1} \leftarrow \min\Bigl(\hat{N}_i \;+\; \bigl\lfloor\frac{b^{\hat{c}_{i+1}} - \hat{s}_{\hat{N}_i}^{\hat{c}_{i+1}}}{F_{i+1}^{\hat{c}_{i+1}}} \bigr\rfloor,\; N\Bigr)$

    \FOR{$n = \hat{N}_i + 1$ to $\hat{N}_{i+1}$}
        \STATE $\hat{a}_n \leftarrow \hat{A}_i$
        \STATE $\hat{s}_n \leftarrow \hat{s}_{n-1} + f(\boldsymbol{e}_n, \hat{A}_i)$
    \ENDFOR

    \STATE $\hat{A}_{i+1} \leftarrow \hat{A}_i - \{\hat{c}_{i+1}\}$
    \STATE $I\leftarrow i $
    \STATE $i \leftarrow i+1$
\ENDWHILE
\end{algorithmic}
\end{algorithm}

We first state the following concentration result, that is proved in the appendix. 
\begin{lemma}
\label{lemma:hoeffding}
Let $n\leqslant N$, $c\in\mathcal{C}$, $\alpha\in\{0,1\}^\mathcal{C}$, and $t>0$, set $F=  \mathbb{E}\bigl[f^c(\boldsymbol{e}_{1}, \alpha) \bigr]$ then 
\begin{align*}
    \mathbb{P}\left(\mid\sum_{i=1}^n f^c(\boldsymbol{e}_i,\alpha) - n F\mid \geqslant t \right)\leqslant 2\exp\left(-\frac{2N t^2}{C^2}\right).
\end{align*}
\end{lemma}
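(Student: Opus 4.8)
The plan is to recognize the left-hand side as a deviation bound for a sum produced by sampling \emph{without replacement} from a finite population, and then to invoke the appropriate Hoeffding-type inequality. First I would fix the campaign $c$ and the activation vector $\alpha$, and for each element $e\in\mathcal{E}$ define the scalar $x_e = f^c(e,\alpha)$. This turns the abstract event set into a fixed population of $N$ real numbers. By Assumption~\ref{assumption:random_order}, the first $n$ draws $(\boldsymbol{e}_1,\ldots,\boldsymbol{e}_n)$ form a uniformly random subset of size $n$ taken without replacement from $\mathcal{E}$, so $\sum_{i=1}^n f^c(\boldsymbol{e}_i,\alpha)$ is exactly the sum of $n$ population values sampled without replacement. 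Since each position is marginally uniform on $\mathcal{E}$ by exchangeability, $\mathbb{E}[f^c(\boldsymbol{e}_i,\alpha)] = \frac{1}{N}\sum_{e\in\mathcal{E}} x_e = F$ for every $i$, and hence $\mathbb{E}[\sum_{i=1}^n f^c(\boldsymbol{e}_i,\alpha)] = nF$. The event in question is therefore precisely the deviation of the partial sum from its mean.

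Next I would control the range of the summands. Assumption~\ref{assumption:small-individual-contribution} gives $0 \le x_e < C/N$ for all $e$, where nonnegativity follows from $f$ mapping into $\mathbb{R}_+^\mathcal{C}$, so each term lies in an interval of length $C/N$. At this point I would apply Hoeffding's inequality for sampling without replacement: the standard two-sided bound for a sum of $n$ samples, each confined to an interval of length $\ell$, reads $2\exp(-2t^2/(n\ell^2))$. With $\ell = C/N$ this yields $\mathbb{P}(|\sum_{i=1}^n f^c(\boldsymbol{e}_i,\alpha) - nF| \ge t) \le 2\exp(-2N^2 t^2/(nC^2))$.

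Finally, since $n \le N$ we have $N^2/n \ge N$, so the exponent satisfies $2N^2 t^2/(nC^2) \ge 2N t^2/C^2$, and monotonicity of the exponential gives the claimed bound $2\exp(-2Nt^2/C^2)$. This last step is mildly lossy, trading the sharper sample-size-dependent exponent for the cleaner $N$-dependent one, but it is exactly what produces the stated form.

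The main obstacle is that the draws are \emph{not} independent, since sampling is without replacement, so the naive Hoeffding bound for independent variables does not apply verbatim. The key technical fact I would rely on is the classical reduction, due to Hoeffding and later sharpened by Serfling, showing that the moment generating function of a without-replacement sample sum is dominated by that of the corresponding with-replacement (i.i.d.) sum. This domination is precisely what licenses using the i.i.d. Hoeffding bound here; everything else, including the range computation and the $n \le N$ simplification, is routine bookkeeping.
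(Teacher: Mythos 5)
Your proposal is correct and follows essentially the same route as the paper: apply the Hoeffding-type inequality for sampling without replacement with the range bound $C/N$ from Assumption~\ref{assumption:small-individual-contribution}, obtaining the exponent $2N^2t^2/(nC^2)$, and then relax it to $2Nt^2/C^2$ using $n\leqslant N$. Your explicit discussion of why the without-replacement dependence is harmless (the Hoeffding/Serfling MGF domination) is a point the paper handles only by citation, but the argument is the same.
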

\paragraph{Main theoretical result}
We can now state our main result concerning the analysis of Algorithm~\ref{alg:example}.
\begin{theorem}
\label{th}
  For $t$ small enough,     
for any $c\in\mathcal{C}$, 
with probability at least $1 - \delta- 2K\exp\left(-\frac{2N t^2}{C^2}\right)$, 
\begin{align}
    \mid s_{N}^c-\hat{s}_{N}^c\mid\leq (1+\gamma)^{K}\left(\frac{C}{N}+t+\gamma\epsilon +\epsilon\right).
\end{align}
\end{theorem}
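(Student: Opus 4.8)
The plan is to compare the true sequential trajectory $s$ with the parallel estimate $\hat s$ \emph{phase by phase}, where a phase is one iteration of the while loop of Algorithm~\ref{alg:example}; there are $K$ such phases (so $K\le|\mathcal C|$), and the events $[1:N]$ are partitioned into the intervals $(\hat N_i,\hat N_{i+1}]$. Two structural observations drive the reduction. First, inside a phase the simulation replays the \emph{actual} events $\boldsymbol{e}_n$ with the frozen active set $\hat A_i$, so as long as $\hat A_i$ agrees with the true active set the increments $f(\boldsymbol{e}_n,\hat A_i)$ are \emph{identical} to the true ones and no fresh discrepancy is created. Second, it follows that the entire error stems from the mismatch between the simulated deactivation schedule $(\hat N_i,\hat c_i)$ and the true capping-out schedule. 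I would therefore reduce the theorem to controlling, for each campaign, the length of the windows on which the simulated and true active sets disagree, together with the coupling these windows induce.

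First I would set up the concentration scaffolding. For every phase $i$, apply Lemma~\ref{lemma:hoeffding} to the campaign $\hat c_{i+1}$ with $\alpha=\hat A_i$ and $F=F_{i+1}^{\hat c_{i+1}}$; conditionally on $\mathcal F_{\hat N_i}$ the remaining events are again an exchangeable sample without replacement, so the lemma (or its conditional analogue) shows that the realized cumulative spend of $\hat c_{i+1}$ stays within $t$ of its expected value $n\,F_{i+1}^{\hat c_{i+1}}$. A union bound over the at most $K$ phases produces the $2K\exp(-2Nt^2/C^2)$ term, and intersecting with the good event of Assumption~\ref{assumption:smooth} contributes the $\delta$. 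On this combined event, dividing the $t$ deviation by the per-event rate bounds how far the simulated cap-out index $\hat N_{i+1}$ can sit from the index at which $\hat c_{i+1}$ would truly exhaust its budget under $\hat A_i$, and Assumption~\ref{assumption:small-individual-contribution} controls the residual budget overshoot by $C/N$ since any single event moves a spend by less than $C/N$.

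The heart of the argument is an inductive error-propagation estimate across the cascade of deactivations. I would define $D_i=\max_{c}\,|s_{\hat N_i}^c-\hat s_{\hat N_i}^c|$, with $D_0=0$, and establish a recursion of the form $D_{i+1}\le (1+\gamma)\,D_i+\bigl(\tfrac{C}{N}+t+\gamma\epsilon+\epsilon\bigr)$. The additive term collects the fresh per-phase error: the $C/N$ overshoot of Assumption~\ref{assumption:small-individual-contribution}, the $t$ concentration error of Lemma~\ref{lemma:hoeffding}, and the $\epsilon$ (and re-injected $\gamma\epsilon$) slack of Assumption~\ref{assumption:smooth}. The multiplicative factor $(1+\gamma)$ is where smoothness does its work: an inherited error of size $D_i$ mistimes the deactivation of $\hat c_{i+1}$, and over the short window on which the two active sets then differ by the single campaign $\hat c_{i+1}$, inequality~\eqref{eq:gamma} bounds the induced perturbation of every surviving campaign's spend by $\gamma$ times the spend of $\hat c_{i+1}$ on that window plus $\epsilon$ — so the inherited error both carries forward (the $1$) and is re-injected into all other campaigns through the budget coupling (the $\gamma$). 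Unrolling the recursion over the $K$ phases and bounding the resulting geometric sum yields the claimed $(1+\gamma)^K\bigl(\tfrac{C}{N}+t+\gamma\epsilon+\epsilon\bigr)$.

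The step I expect to be the main obstacle is making the two deactivation schedules line up rigorously. The simulation selects $\hat c_{i+1}$ by an \emph{expected}-rate $\argmin$, so beyond the timing error one must rule out, or absorb, the event that the simulated deactivation \emph{order} differs from the true one. I would argue that on the good event the time-to-cap ratios are estimated within the same $t$ and $C/N$ tolerance, so a genuine order swap can only occur when two campaigns cap out within that tolerance of one another — a regime in which swapping them shifts each spend by at most one window's worth and is therefore already subsumed by the per-phase additive term. Pinning down the hypothesis ``for $t$ small enough'' precisely (it is what guarantees short windows and a stable $\argmin$), and checking that every invocation of Assumption~\ref{assumption:smooth} is over an interval $(m,n)$ of the admissible form with the correct $a$, are the places where the bookkeeping must be carried out with care.
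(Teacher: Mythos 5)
Your overall strategy---a phase-by-phase induction aligned with the iterations of Algorithm~\ref{alg:example}, Hoeffding via Lemma~\ref{lemma:hoeffding} with a union bound over the $K$ phases, the smoothness assumption supplying the $(1+\gamma)$ propagation factor, and Assumption~\ref{assumption:small-individual-contribution} supplying the $C/N$ overshoot---is exactly the paper's, and your identification of the order-swap issue as the place where ``$t$ small enough'' is consumed matches the step the paper handles by asserting $\hat a_n=a_n$ outside the mismatch windows. However, there is a genuine quantitative gap in your final step: the recursion $D_{i+1}\le(1+\gamma)D_i+\beta$ with $D_0=0$ and $\beta=\frac{C}{N}+t+\gamma\epsilon+\epsilon$ unrolls to
\begin{align*}
D_K\;\le\;\beta\sum_{j=0}^{K-1}(1+\gamma)^j\;=\;\beta\,\frac{(1+\gamma)^K-1}{\gamma},
\end{align*}
which for $\gamma$ of order $1/K$ (the regime of Corollary~\ref{corr}) is of order $K\beta$, not $(1+\gamma)^K\beta$. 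Your unrolling therefore does not yield the claimed constant; it loses a factor of roughly $\min(K,1/\gamma)$.

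The source of the discrepancy is the choice of tracked quantity. By taking $D_i=\max_c|s^c_{\hat N_i}-\hat s^c_{\hat N_i}|$ you re-inject the full fresh error $\frac{C}{N}+t$ into every campaign at every phase. The paper instead tracks $\eta_i:=\bigl|b^{c_i}-\sum_{n=1}^{\hat N_i}f^{c_i}(e_n,a_n)\bigr|$, the budget overshoot of the \emph{true} trajectory of campaign $c_i$ evaluated at its \emph{estimated} cap-out time. With this choice the decomposition is $\eta_i\le A+B$, where $A\le\frac{C}{N}+t$ is the one-time timing error of campaign $c_i$ on its own account (floor function plus Hoeffding), and $B\le\gamma\sum_{j<i}\eta_j+\epsilon$ is the inherited error, which enters only attenuated by $\gamma$ because the mistimed deactivation of $c_j$ perturbs $c_i$'s spend by at most $\gamma$ times $c_j$'s spend on the mismatch window---and that spend is precisely $\eta_j$. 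The resulting recursion $\eta_i\le\frac{C}{N}+t+\gamma\sum_{j<i}(\eta_j+\epsilon)$ has a majorizing sequence satisfying $u_i+\epsilon=(1+\gamma)(u_{i-1}+\epsilon)$, which is what produces $(1+\gamma)^{K}\bigl(\frac{C}{N}+t+\gamma\epsilon+\epsilon\bigr)$ with no extra factor of $K$ or $1/\gamma$. To repair your argument you would need to separate, within each phase, the error a campaign incurs once at its own cap-out from the error it inherits from others, the latter always being multiplied by $\gamma$ before it is passed on.
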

\paragraph{Proof sketch}
The proof relies on the observation that the error $\mid s_n^c - \hat{s}_n^n\mid$  is controlled at $n=\hat{N}^{c_i}$ for $c=c_i$ by the error made on the campaigned that already capped out. This comes from the smoothness assumption. Then a bound in derived by induction. 

\begin{corollary}
\label{corr}
Suppose $\gamma \leq \frac{D}{K}$. 
  For $t$ small enough,     
for any $c\in\mathcal{C}$, 
with probability at least $1 - \delta- 2K\exp\left(-\frac{2N t^2}{C^2}\right)$, 
\begin{align}
    \mid s_{N}^c-\hat{s}_{N}^c\mid\leq e^{D}\left(\frac{C}{N}+t+\gamma\epsilon +\epsilon\right).
\end{align}
\end{corollary}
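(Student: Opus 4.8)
The plan is to derive Corollary~\ref{corr} directly from Theorem~\ref{th} by simplifying the prefactor $(1+\gamma)^K$ under the extra hypothesis $\gamma \le D/K$. The probability bound carries over verbatim from the theorem, since the hypothesis only constrains $\gamma$ and does not alter the concentration events involved. So the entire content of the corollary reduces to controlling the multiplicative constant in front of the error term.

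\textbf{Key step.} First I would substitute the bound $\gamma \le D/K$ into the factor $(1+\gamma)^K$. Using monotonicity of $x \mapsto (1+x)^K$ on $[0,\infty)$, I get
\begin{align*}
    (1+\gamma)^K \le \left(1 + \frac{D}{K}\right)^K.
\end{align*}
Then I invoke the classical inequality $(1 + x/K)^K \le e^x$, valid for all real $x$ and all positive integers $K$ (it follows from $1 + u \le e^u$ applied with $u = x/K$ and raising to the $K$-th power). With $x = D$ this yields $(1+D/K)^K \le e^D$, hence $(1+\gamma)^K \le e^D$.

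\textbf{Assembling the bound.} Since the error term $\bigl(\frac{C}{N}+t+\gamma\epsilon+\epsilon\bigr)$ is nonnegative, replacing the prefactor $(1+\gamma)^K$ by the larger quantity $e^D$ preserves the inequality. Therefore, on the same event of probability at least $1 - \delta - 2K\exp\bigl(-\frac{2Nt^2}{C^2}\bigr)$ guaranteed by Theorem~\ref{th}, we obtain
\begin{align*}
    \mid s_N^c - \hat{s}_N^c \mid \le (1+\gamma)^K\left(\frac{C}{N}+t+\gamma\epsilon+\epsilon\right) \le e^D\left(\frac{C}{N}+t+\gamma\epsilon+\epsilon\right),
\end{align*}
which is exactly the claimed statement.

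There is essentially no hard part here: the corollary is a one-line weakening of the theorem, and the only nontrivial ingredient is the standard exponential bound $(1+x/K)^K \le e^D$. The one point worth stating carefully is that the substitution is legitimate precisely because both the prefactor and the bracketed error term are nonnegative, so the direction of the inequality is preserved; I would make sure $t$ is taken small enough (as in the theorem) so that the whole expression remains a meaningful bound.
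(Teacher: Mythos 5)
Your proof is correct and is exactly the intended derivation: the corollary follows from Theorem~\ref{th} by the standard bound $(1+\gamma)^K \le (1+D/K)^K \le e^D$, which is why the paper states it without a separate proof. Nothing is missing.
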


\paragraph{Insights from Theorem~\ref{th}}
While Algorithm~\ref{alg:example} may have limited practical applicability due to the number of iterations required, it establishes a fundamental principle for developing more efficient computing strategies. In particular, knowing either (a) the exact times when campaigns deplete their budgets or (b) simply the order in which they do so can significantly simplify computation.  

For instance, if the depletion order is known, the average value of $ f() $ over the relevant activation vectors can be computed in a scalable manner. This, in turn, allows us to determine the depletion times and reconstruct the full history efficiently. However, when the depletion order is unknown, the number of required evaluations of the average value of $ f() $ increases from $ O(|\mathcal{C}|) $ to $ 2^{|\mathcal{C}|} $, leading to a significant computational burden.  
We leverage this observation in the next section.

\section{\textsc{sort2aggregate}}
We next leverage the insight from the previous section to discuss strategies that could be both computationally efficient and statistically sound. 
First, observe that in Algorithm~\ref{alg:example}, the intermediate estimation of the capping out of the campaigns are not reused.
When $f^c$ is decreasing in  all components of the activation vector (aside from $c$'s), then we can use arguments from lattice theory to build an algorithm that keeps track of those intermediate capping out times and that converges (by Tasky's fixed-point theorem, see~\cite{heymann2024reverse,topkis1998supermodularity} ). 
Such a strategy would however still require multiple steps of computation at scale. 
\paragraph{\textsc{Sort2aggregate} algorithm}
We introduce the times when the $i$th campaign exhausts its budget, formally 
\begin{align}
N_i = \min\{n\in[0,N]:\sum_{c\in\mathcal{C}}(1-a_n^c)\geqslant i\},
\end{align}
and $c_i$ be the campaign associated with the stopping times $N_i$.
The pseudo-code given in Algorithm ~\ref{alg:rank_estimation} provide a general strategy for scaling-up the counterfactual estimation. 
After the first step, the computations can be parallelized because the activation vectors have been identified. Notably, any errors in one step should become apparent in the subsequent step, providing a built-in safeguard for detecting inconsistencies.

Algorithm~\ref{alg:heuristic_frequency} provides a method to derive an estimate of the $N_i$s. It is based on replacing $N^c$ by $N \cdot \pi_c$ with $\pi_c\in [0,1]$. 
We now provide a theoretical justification for the convergence of Algorithm~\ref{alg:heuristic_frequency}  using the framework of variational inequalities (VIs).

Let $\pi \in [0,1]^{|\mathcal{C}|}$ represent the (scaled) finish time for each campaign (end of day of capping out), and define the residual operator
\begin{align}
   G(\pi) = F(\pi)-b, 
\end{align}
where $F(\pi)$ represents the expected cumulative spend as a function of $\pi$ 
and $b$ is the vector of budgets. 
The conditions on $\pi$ corresponds to the
complementarity conditions
\begin{align*}
        0 \le 1-\pi_c, \quad 0 \geq G_c(\pi), \quad
    \underbrace{(1-\pi_c)}_{\text{0 if c finishes the day}} \cdot \underbrace{G_c(\pi)}_{\text{0 if c  cap-out}} = 0,
    \quad \forall c \in \mathcal{C},
\end{align*}
which is equivalent to the variational inequality problem $\mathrm{VI}(K,G)$:
\[
    \text{Find } \pi^\star \in K \text{ such that } 
    \langle G(\pi^\star),\, \pi - \pi^\star\rangle \ge 0 
    \quad \forall\, \pi \in K,
\]
where $K=[0,1]^{|\mathcal{C}|}$.
Algorithm~\ref{alg:heuristic_frequency} can be interpreted as a projected fixed-point iteration on the VI:
\[
    x^{(t+1)} 
    = \Pi_{K}\Bigl( x^{(t)} - \eta \, G(x^{(t)}) \Bigr),
\]
where $\Pi_K$ is the Euclidean projection onto the feasible set $K$ 
and $\eta>0$ is a step size. 
Notably, this update depends only on the residual $G(x) = F(x)-B$ 
and does not require the Jacobian of $F$, making it scalable for 
large systems.
Under the assumptions that $F$ is monotone in each coordinate 
and Lipschitz continuous with constant $L$, the residual-only 
projected iteration is a (projected)  linearized Jacobi dynamics~\cite{harker1990finite}. Identification of conditions on the primitive of the problem that ensures convergence is left for further work.
Last, observe that Algorithm~\ref{alg:heuristic_frequency} itself can be slightly modified to be implemented at scale (stochastic gradient).
\begin{algorithm}[tb]
   \caption{\textsc{Sort2aggregate}: Method for counterfactual  estimation at scale }
   \label{alg:rank_estimation}
\begin{algorithmic}[1]
   \STATE \textbf{Step 1:} Get an estimate of the rank of the campaigns by $N_i$
   \STATE \textbf{Step 2:} Refine $N_i$ (optional)
   \STATE \textbf{Step 3:} Aggregate at scale
\end{algorithmic}
\end{algorithm}

\begin{algorithm}[tb]
   \caption{  $N_i$s Estimation}
   \label{alg:heuristic_frequency}
\begin{algorithmic}[1]
   \REQUIRE Number of events $N$, number of campaigns $|\mathcal{C}|$, budgets $\{b^c\}_{c \in \mathcal{C}}$, auction rule $f(\mathbf{e}, a)$, sampling rate $\rho$, optimization rate $\eta$, number of iterations $T$

   \STATE $k \leftarrow \text{round}(N \times \rho)$
   \STATE $\mathcal{E}^{\text{sampled}} \leftarrow$ sample $k$ events randomly from $[1, N]$ without replacement
   \STATE $\tilde{b}^c \leftarrow b^c / N, \quad \forall c \in \mathcal{C}$
   \STATE $\boldsymbol{\pi} \leftarrow \mathbf{1}$ (vector of ones of size $|\mathcal{C}|$)

   \FOR{$t = 1$ to $T$}
      \FOR{each event $e \in \mathcal{E}^{\text{sampled}}$}
         \STATE Draw $u \sim \text{Uniform}(0,1)$
         \STATE $a_c \leftarrow 1$ if $u_c < \pi_c$, else $0$, for each $c \in \mathcal{C}$ (activation vector)

         \STATE $s \leftarrow f(e, a)$ (spending per campaign)
         \STATE $\Delta \leftarrow (\tilde{b} - s)$
         \STATE $\boldsymbol{\pi} \leftarrow \max(\min(\boldsymbol{\pi} + \eta \cdot \Delta, 1), 0)$
      \ENDFOR
   \ENDFOR
\end{algorithmic}
\end{algorithm}
\paragraph{Computing time}
If $A$ is the wall clock time for solving one auction, the sequential simulation should take $N\cdot A$. 
By contrast, with sort2aggregate, we should expect $N\cdot A\cdot T\cdot \rho/N_{core}$ for the $N_i$ estimation, and then $N\cdot A/N_{core}$ for the aggregation.

\section{Numerical experiments}
\subsection{Fully synthetic data}
For reproducibility and simplicity purpose, we first test the ideas presented in this paper on generated data. The code will be made available in Julia.

The experimental setup simulates an auction-based advertising system where multiple campaigns compete for impressions in a first-price auction. Each auction event corresponds to an advertising opportunity, and each campaign assigns a valuation to an event based on a learned embedding representation. Specifically, each campaign $c$ and each event $e$ are represented as vectors in a shared embedding space of dimension $d$, where the embeddings are sampled from a multivariate normal distribution. The event embeddings are computed as:  

\begin{equation}
\mathbf{e}_i = \frac{\mathbf{e}_{\text{base}} + 3 \cdot \boldsymbol{\xi}_i}{4}, \quad \boldsymbol{\xi}_i \sim \mathcal{N}(0, I_d)
\end{equation}

where $\mathbf{e}_{\text{base}}$ is a global reference embedding and $\boldsymbol{\xi}_i$ represents random perturbations drawn from a standard normal distribution. Similarly, each campaign $c$ is associated with an embedding vector $\mathbf{r}_c$ drawn independently from $\mathcal{N}(0, I_d)$. The valuation of campaign $c$ for event $e_i$ is computed using a scaled inner product similarity measure:

\begin{equation}
 v_c(e_i) = \min(\frac{\exp\left(\frac{\mathbf{r}_c^\top \mathbf{e}_i}{2\sqrt{d}}\right)}{10},1)
\end{equation}

which ensures that valuations remain bounded while capturing campaign-event affinity. Given these computed values, an auction rule determines the winning campaign and its corresponding expenditure. The first-price auction mechanism is implemented such that the highest-bidding campaign wins and pays its bid, constrained by budget limitations.

The budgets of the campaigns are set in a structured manner to ensure that a significant portion of the campaigns deplete their budgets, creating a competitive auction environment. Specifically, the budgets are assigned as follows:

\begin{equation}
    b^c = k \cdot b_{\text{base}}, \quad \text{for } k \in \{1, 2, \dots, |\mathcal{C}|\}
\end{equation}

where $b_{\text{base}}$ is a fixed base value, and each campaign $c$ is assigned a budget proportional to its index $k$. This setup ensures that budgets scale linearly with campaign indices, leading to a mix of high- and low-budget campaigns.

Additionally, to create a realistic scenario where some campaigns saturate their budgets while others remain active, the base budget is chosen such that approximately 50\% of the campaigns reach their budget limits by the end of the auction process. This ensures a dynamic competitive landscape, making the auction outcomes more representative of real-world bidding environments.

We first check that the naive approach of sampling events from $\mathcal{E}$ and then do a sequential simulation with rescaled values that account for the sampling rate can be a bad idea:  in Figure~\ref{fig:icml-sampling}, we see how the predictions become irrelevant as the sampling rate increases.

In Figure~\ref{fig:icml-parallel}, we compare the output of the sequential and parallel simulations for the same environment. We can check that they are indeed extremely close.
In Figure~\ref{fig:frequenciesl} we observe the convergence of Algorithm~\ref{alg:heuristic_frequency}. 
Last, in Figure~\ref{fig:final}, we can check that Algorithm~\ref{alg:rank_estimation} is at the same time scalable and accurate, two properties not shared by  the naive sampling method.

\begin{figure}[ht]
\vskip 0.2in
\begin{center}
\centerline{\includegraphics[width=\columnwidth]{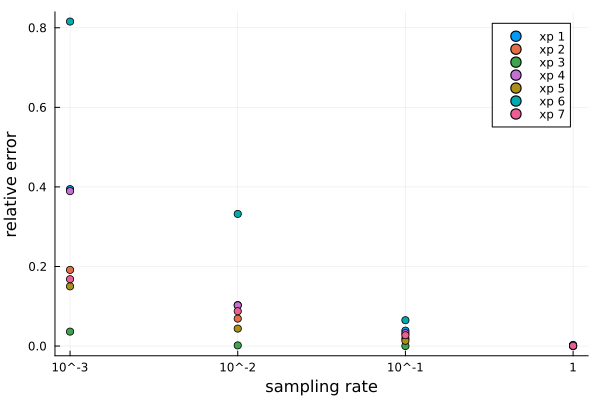}}
\caption{Sampling events before applying directly a sequential simulation might result in very poor estimate. We repeated 7 times the experience for different sampling rate, with an embedding size of 10, 100 campaigns and $10^6$ events. The base budget was 70. 
The error rate is defined as $|\frac{s^{|C|}_N - \hat{s}^{|C|}_N|}{s^{|C|}_N}$.}
\label{fig:icml-sampling}
\end{center}
\vskip -0.2in
\end{figure}

\begin{figure}[ht]
\vskip 0.2in
\begin{center}
\centerline{\includegraphics[width=\columnwidth]{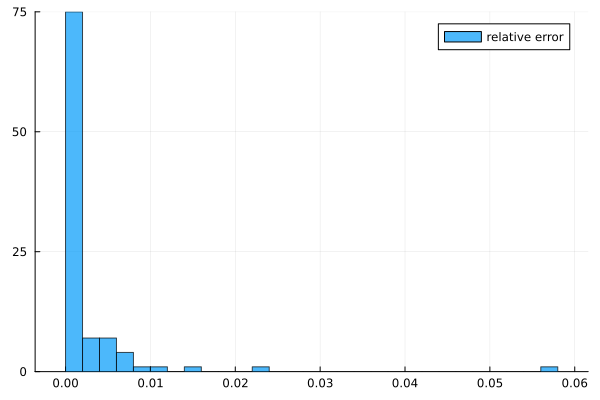}}
\caption{Same setup as in Figure~\ref{fig:icml-sampling}. We compare the output of the sequential and parallel simulations. We can check that they are indeed extremely close. }
\label{fig:icml-parallel}
\end{center}
\vskip -0.2in
\end{figure}

\begin{figure}[ht]
\vskip 0.2in
\begin{center}
\centerline{\includegraphics[width=\columnwidth]{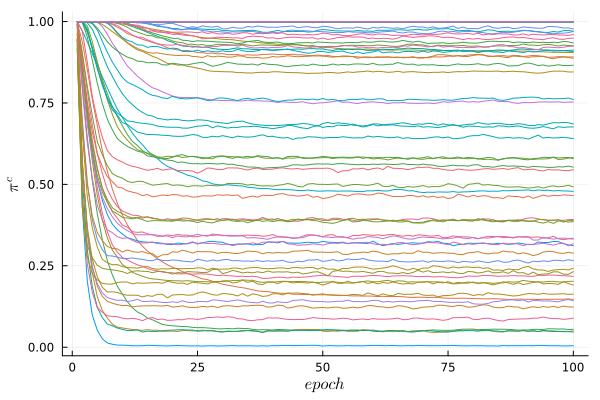}}
\caption{Same setup as in Figure~\ref{fig:icml-sampling}. Convergence of Algorithm~\ref{alg:heuristic_frequency} to estimate the frequencies $\pi^c=\frac{N^c}{N}$. The sampling rate was $0.001$}
\label{fig:frequenciesl}
\end{center}
\vskip -0.2in
\end{figure}

\begin{figure}[ht]
\vskip 0.2in
\begin{center}
\centerline{\includegraphics[width=\columnwidth]{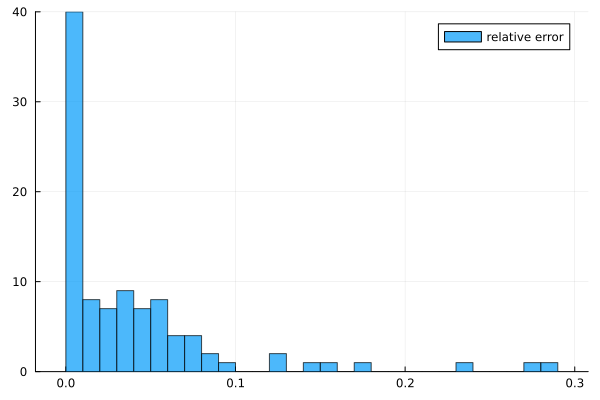}}
\caption{Comparison of the output of \textsc{sort2aggregate} with the ground truth.}
\label{fig:final}
\end{center}
\vskip -0.2in
\end{figure}

\subsection{Yahoo Dataset}
We now move to a more realistic setting by using a dataset released by Yahoo~\footnote{Yahoo! Search Marketing advertiser bidding data, version 1.0, see~\url{http://research.yahoo.com}}.
It is made available to researchers upon request~\cite{yahoo}, while our code will be made available in Python. 
We use one day of advertisers bid on keywords, which allows us to generate a realistic distribution of bids.
There are about 1000 keywords, from which we sample sequences of different sizes. 
We suppose the bids were constant  (we take the average if the advertiser used several bids for a given keyword during this day).

We reproduce a setting were the volume between two consecutive days is increased from 100 000 to 150 000 opportunities, while the budget is fixed (constant budget across all bidders of 2000). 
We use the capping out time of the first day as initialization for the~\textsc{sort2aggregate} algorithm.

In Figure~\ref{fig:yahoo1} we show how the iterates behave on a few selected campaigns. We see that indeed, the campaigns that were spending more than their budget in the first iteration adjust their spend downward. 

We then compare \textsc{sort2aggregate} with two heuristics relying on the first day to predict the second. The first predicts as is, the second applies a rescaling. 
The results are displayed in Figure~\ref{fig:yahoo2}.
We see that \textsc{sort2aggregate} provides a clear advantage over those heuristics. 

\section{Discussion}
Through the abstraction of the auction mechanism, the methodology derived in this work is highly general and adaptable to complex auction settings. It can accommodate scenarios such as multi-slot auctions, which are common in search engine result pages, nested campaigns where multiple line items are linked through a master campaign, and state-dependent cost-per-click (CPC) models that capture diminishing marginal returns as advertising volume increases. This flexibility underscores the broader applicability of our approach beyond simple auction models.  

We believe that such methodologies are crucial in informing decisions that carry significant business, economic, legal, and fairness-related implications. By providing a structured framework for counterfactual estimation in auction environments, our work contributes to the understanding of how change on the platform impact key stakeholders. 

A key challenge remains in extending the methodology to more general functions $ f $, which would, for example, relax the assumption that past auctions influence $ f $ solely through the activation vector. This extension could allow for time-dependent parameters and capture more intricate temporal dependencies in auction dynamics.

\begin{figure}
    \centering
    \includegraphics[width=1.\linewidth]{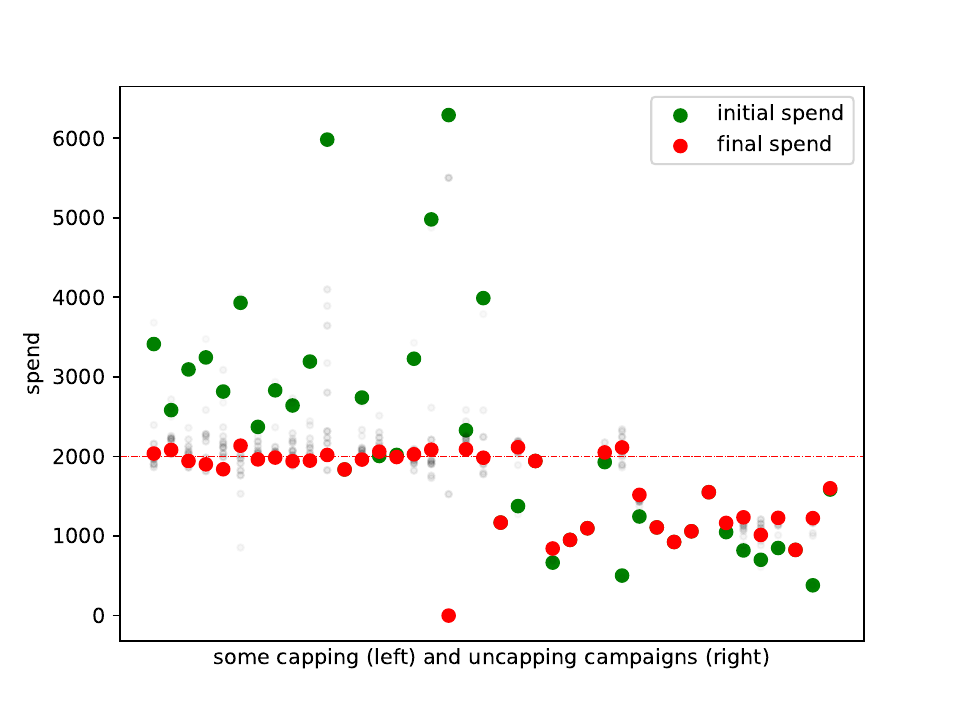}
    \caption{Evolution of the predicted spend for a subset of campaigns over the iterations of the algorithm. The budget (same for all campaigns) is the red dotted horizontal line. }
    \label{fig:yahoo1}
\end{figure}

\begin{figure}
    \centering
    \includegraphics[width=1.\linewidth]{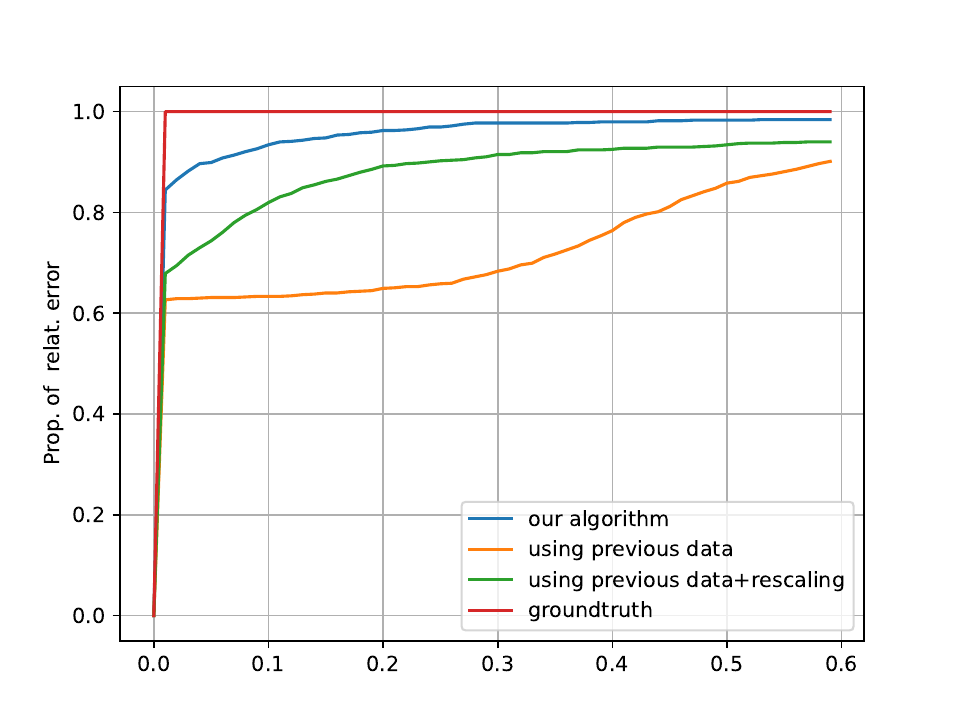}
    \caption{Cumulative of the relative error for the campaigns (weighted by spend). We see that our algorithm improve upon the heuristic of just rescaling the spend by the increase of volume.}
    \label{fig:yahoo2}
\end{figure}

\bibliographystyle{ACM-Reference-Format}
\bibliography{example_paper}

\newpage
\appendix
\onecolumn
\section*{Proofs}
\begin{lemma}
\label{lemma:hoeffding}
Let $n\leqslant N$, $c\in\mathcal{C}$, $\alpha\in\{0,1\}^\mathcal{C}$, and $t>0$, set $F=  \mathbb{E}\bigl[f^c(\boldsymbol{e}_{1}, \alpha) \bigr]$ then 
\begin{align}
    \mathbb{P}\left(\mid\sum_{i=1}^n f^c(\boldsymbol{e}_i,\alpha) - n F\mid \geqslant t \right)\leqslant 2\exp\left(-\frac{2N t^2}{C^2}\right).
\end{align}
\end{lemma}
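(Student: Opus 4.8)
The plan is to recognize the sum as a statistic of a sample drawn \emph{without replacement} from a finite population, and then invoke the sampling-without-replacement form of Hoeffding's inequality. First I would fix $c$ and $\alpha$ and introduce the population $\{v_e := f^c(e,\alpha) : e\in\mathcal{E}\}$ of $N$ real numbers. By Assumption~\ref{assumption:random_order}, the realized sequence $(\boldsymbol{e}_1,\ldots,\boldsymbol{e}_N)$ is a uniform sample without replacement from $\mathcal{E}$, so $(X_1,\ldots,X_n):=(f^c(\boldsymbol{e}_1,\alpha),\ldots,f^c(\boldsymbol{e}_n,\alpha))$ is exactly the first $n$ entries of a uniformly random permutation of this population. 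Since $f$ is $\mathbb{R}_+^\mathcal{C}$-valued and, by Assumption~\ref{assumption:small-individual-contribution}, $f^c(e,\alpha)<C/N$ for every $e$, each $v_e$ lies in $[0,C/N)$, so the population range is bounded by $C/N$. Moreover $F=\mathbb{E}[f^c(\boldsymbol{e}_1,\alpha)]=\frac1N\sum_{e\in\mathcal{E}}v_e$ is precisely the population mean, because a single without-replacement draw is uniform over $\mathcal{E}$. Hence $S_n:=\sum_{i=1}^n X_i$ has mean $nF$, and the task reduces to controlling $\mathbb{P}(|S_n-nF|\ge t)$.

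Next I would apply Hoeffding's inequality in its without-replacement version, bounding both $S_n-nF$ and $nF-S_n$ by a one-sided Chernoff argument and union-bounding to produce the factor $2$. The decisive point is that the $X_i$ are \emph{dependent}, so the independent-case Hoeffding bound is not immediately available; I would instead rely on Hoeffding's classical observation that for sampling without replacement the moment generating function of $S_n-nF$ is dominated by that of the corresponding sum of i.i.d.\ draws with replacement, so the usual sub-Gaussian argument goes through with the same range parameter. With range $C/N$ and $n$ summands this yields
\[
    \mathbb{P}(|S_n-nF|\ge t)\le 2\exp\!\left(-\frac{2t^2}{n\,(C/N)^2}\right)=2\exp\!\left(-\frac{2N^2 t^2}{nC^2}\right).
\]

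Finally I would coarsen the exponent using $n\le N$: since $N^2/n\ge N$ whenever $n\le N$, we have $\frac{2N^2 t^2}{nC^2}\ge \frac{2Nt^2}{C^2}$, which gives the stated bound $2\exp(-2Nt^2/C^2)$. Note the inequality has slack — the sharper form with $N^2/n$ in the exponent (or the even tighter Serfling correction) holds — but keeping the weaker factor $N$ makes the bound uniform in $n$, which is exactly what the union bound over the $K$ capping-out steps in Theorem~\ref{th} requires. The hard part is not the algebra, which is routine bookkeeping with the range estimate from Assumption~\ref{assumption:small-individual-contribution}, but the justification that the without-replacement dependence only \emph{helps} concentration, so that the with-replacement Hoeffding constant remains valid.
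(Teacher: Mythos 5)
Your proof is correct and follows essentially the same route as the paper's: both invoke Hoeffding's inequality for sampling without replacement with the range bound $C/N$ from Assumption~\ref{assumption:small-individual-contribution}, obtain the exponent $2N^2t^2/(nC^2)$, and coarsen it to $2Nt^2/C^2$ using $n\leqslant N$. Your additional justification of why the without-replacement dependence preserves the Hoeffding constant (the MGF domination argument) is exactly what the paper delegates to its citations.
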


\begin{proof}
By Hoeffding's inequality~\cite{bardenetConcentrationInequalitiesSampling2015,hoeffding1994probability} and the random order assumption~\ref{assumption:random_order}, and then by the small individual contribution Assumption~\ref{assumption:small-individual-contribution}
\begin{align}
        \mathbb{P}\left(\mid\sum_{i=1}^n f(\boldsymbol{e}_i,\alpha) - n F\mid \geqslant \epsilon n  \right)\leqslant 2\exp\left(-\frac{2n \epsilon^2}{\Vert f\Vert_\infty^2}\right)\leqslant2\exp\left(-\frac{2nN^2 \epsilon^2}{C^2}\right).
\end{align}
Setting $t=\epsilon n $, we get 
\begin{align}
        \mathbb{P}\left(\mid\sum_{i=1}^n f(\boldsymbol{e}_i,\alpha) - n F\mid \geqslant t \right)\leqslant2\exp\left(-\frac{2N^2 t^2}{nC^2}\right)\leqslant2\exp\left(-\frac{2N t^2}{C^2}\right).
\end{align}
\end{proof}

We introduce the times when the $i$th campaign exhausts its budget, formally 
\begin{align}
N_i = \min\{n\in[0,N]:\sum_{c\in\mathcal{C}}(1-a_n^c)\geqslant i\}.
\end{align}

We recall the following definitions from the main text:
\begin{align}
\label{eq:defF}
    F_{i+1} = \mathbb{E}\bigl[f(\boldsymbol{e}_{\hat{N}_i + 1}, \hat{a}_i) \,\big|\,
    \{\hat{N}_i, \boldsymbol{e}_1, \ldots, \boldsymbol{e}_{\hat{N}_i}\}\bigr].
\end{align}
Also, from, 
\begin{align}
 \hat{N}_{i+1} = \min\Bigl(\hat{N}_i \;+\; \bigl\lfloor\frac{b^{\hat{c}_{i+1}} - \hat{s}_{\hat{N}_i}^{\hat{c}_{i+1}}}{F_{i+1}^{\hat{c}_{i+1}}} \bigr\rfloor,\; N\Bigr), 
\end{align}
we deduce that whenever $\hat{N}_{i+1}<N$, 
\begin{align}
 \hat{N}_{i+1} - \hat{N}_i=  \bigl\lfloor\frac{b^{\hat{c}_{i+1}} - \hat{s}_{\hat{N}_i}^{\hat{c}_{i+1}}}{F_{i+1}^{\hat{c}_{i+1}}} \bigr\rfloor 
\end{align}
hence, by definition of $x\to \lfloor x \rfloor$
\begin{align}
\frac{b^{\hat{c}_{i+1}} - \hat{s}_{\hat{N}_i}^{\hat{c}_{i+1}}}{F_{i+1}^{\hat{c}_{i+1}}} - 1 \leqslant\hat{N}_{i+1}- \hat{N}_i \leqslant \frac{b^{\hat{c}_{i+1}} - \hat{s}_{\hat{N}_i}^{\hat{c}_{i+1}}}{F_{i+1}^{\hat{c}_{i+1}}},
\end{align}
hence, since $F^{c_i}_{i+1}>0$
 \begin{align}
b^{\hat{c}_{i+1}} - \hat{s}_{\hat{N}_i}^{\hat{c}_{i+1}} - F_{i+1}^{\hat{c}_{i+1}} \leqslant\left(\hat{N}_{i+1}- \hat{N}_i\right) F_{i+1}^{\hat{c}_{i+1}} \leqslant b^{\hat{c}_{i+1}} - \hat{s}_{\hat{N}_i}^{\hat{c}_{i+1}} 
\end{align}
hence, because by the small individual contribution assumption~\eqref{assumption:small-individual-contribution}, 
$F_{i+1}^{\hat{c}_{i+1}}\leq \frac{C}{N}$,
 \begin{align}
 \label{eq:floor_final}
b^{\hat{c}_{i+1}} - \hat{s}_{\hat{N}_i}^{\hat{c}_{i+1}} - \frac{C}{N}\leqslant\left(\hat{N}_{i+1}- \hat{N}_i\right) F_{i+1}^{\hat{c}_{i+1}} \leqslant b^{\hat{c}_{i+1}} - \hat{s}_{\hat{N}_i}^{\hat{c}_{i+1}} 
\end{align}

\begin{proof}[Main theorem]
Let $c_i$ be the campaign associated with the stopping times $N_i$, we have the following
    \begin{align}
       \eta_{i} \overbrace{:=}^{\text{defined as}}&\mid b^{c_i} - \sum_{n=1}^{\hat{N}_i} f^{c_i}(e_n,a_n) \mid 
        \overbrace{\leqslant}^{\text{triangular inequality}}
        & \underbrace{\mid b^{c_i} - \sum_{n=1}^{\hat{N}_i} f^{c_i}(e_n,\hat{a}_n) \mid}_{A}  + \\ &\underbrace{\mid \sum_{n=1}^{\hat{N}_i} f^{c_i}(e_n,\hat{a}_n) - \sum_{n=1}^{\hat{N}_i} f^{c_i}(e_n,a_n) \mid}_{B}
    \end{align}
    \begin{align*}
        A &= \mid b^{c_i} - \sum_{n=1}^{\hat{N}_i} f^{c_i}(e_n,\hat{a}_n) \mid\\
          &= \mid b^{c_i} \overbrace{- \hat{s}_{\hat{N}_{i-1}}^{c_i} + \underbrace{\hat{s}_{\hat{N}_{i-1}}^{c_i}}_{=\sum_{n=1}^{\hat{N}_{i-1}}f^{c_i}(e_n,\hat{a}_n)}}^{=0} -\sum_{n=1}^{\hat{N}_i} f^{c_i}(e_n,\hat{a}_n) \mid \\
          &= \mid b^{c_i} - \hat{s}_{\hat{N}_{i-1}}^{c_i} +  \sum_{n=1}^{\hat{N}_{i-1}}f^{c_i}(e_n,\hat{a}_n) -\sum_{n=1}^{\hat{N}_i} f^{c_i}(e_n,\hat{a}_n) \mid \\
        &= \mid b^{c_i} - \hat{s}_{\hat{N}_{i-1}}^{c_i} -\sum_{n=\hat{N}_{i-1}+1}^{\hat{N}_i} f^{c_i}(e_n,\hat{a}_n) \mid \\
          &= \mid b^{c_i} - \hat{s}_{\hat{N}_{i-1}}^{c_i} 
         \overbrace{- \left(\hat{N}_i - \hat{N}_{i-1}\right)\mathbb{E}\left[f^{c_i}(e_n,\hat{a}_n)\mid \mathcal{F}_{\hat{N}_{i-1}}\right]
          + \left(\hat{N}_i - \hat{N}_{i-1}\right)\mathbb{E}\left[f^{c_i}(e_n,\hat{a}_n)\mid \mathcal{F}_{\hat{N}_{i-1}}\right]}^{=0}
          \\ & -\sum_{n=\hat{N}_{i-1}+1}^{\hat{N}_i} f^{c_i}(e_n,\hat{a}_n) \mid \\
          &\overbrace{\leqslant}^{\text{triang. ineq.}} \underbrace{\mid b^{c_i} - \hat{s}_{\hat{N}_{i-1}}^{c_i} 
         - \left(\hat{N}_i - \hat{N}_{i-1}\right)\mathbb{E}\left[f^{c_i}(e_n,\hat{a}_n)\mid \mathcal{F}_{\hat{N}_{i-1}}\right]
          \mid}_{\leqslant \frac{C}{N} \text{by~\eqref{eq:floor_final} and~\eqref{eq:defF}}} +\\& \underbrace{\mid \left(\hat{N}_i - \hat{N}_{i-1}\right)\mathbb{E}\left[f^{c_i}(e_n,\hat{a}_n)\mid \mathcal{F}_{\hat{N}_{i-1}}\right]
          -\sum_{n=\hat{N}_{i-1}+1}^{\hat{N}_i} f^{c_i}(e_n,\hat{a}_n) \mid }_{\leqslant t~\text{with probability } 2\exp\left(-\frac{2N t^2}{C^2}\right) \text{by Lemma~\ref{lemma:hoeffding}}}\\
          &\leqslant \frac{C}{N}+t\quad \text{with probability } 1-2\exp\left(-\frac{2N t^2}{C^2}\right)
    \end{align*}

On the other hand, we have
\begin{align*}
    B &=\mid \sum_{n=1}^{\hat{N}_i} f^{c_i}(e_n,\hat{a}_n) - \sum_{n=1}^{\hat{N}_i} f^{c_i}(e_n,a_n) \mid\\
    &\leqslant 
    \mid \underbrace{
    \sum_{j=0}^{i-1}\sum_{n=N_j\wedge\hat{N}_j}^{(N_j\vee \hat{N}_j)\wedge \hat{N}_i}\left(f^{c_i}(e_n,\hat{a}_n) -f^{c_i}(e_n,a_n) \right)}_{\text{controlled by the smooth assumption~\eqref{eq:gamma}}}  + \underbrace{
    \sum_{j=0}^{i-1}\sum_{n=N_j\vee\hat{N}_j +1}^{(N_{j+1}\vee \hat{N}_{j+1})\wedge \hat{N}_i}\left(f^{c_i}(e_n,\hat{a}_n) -f^{c_i}(e_n,a_n) \right) }_{=0~\text{since \ } \hat{a}_n = a_n \text{for $t$ small enough}}
    \mid \\
    &\leqslant 
    \sum_{j=0}^{i-1}\gamma\underbrace{\sum_{n=N_j\wedge\hat{N}_j}^{(N_j\vee \hat{N}_j)\wedge \hat{N}_i}  f^{c_j}(e_n,a_n)}_{\eta_j} +\epsilon \quad \text{with probability $1-\delta$}
\end{align*}
Therefore, with probability at least $1 - \delta- 2\exp\left(-\frac{2N t^2}{C^2}\right)$,
\begin{align}
    \eta_{i}\leqslant \frac{C}{N}+t +\gamma\sum_{j=0}^{i-1}\eta_j+\epsilon.
\end{align}
Remembering the notation $K=|\mathcal{C}|$, we have, by  induction, 
with probability at least $1 - \delta- 2K\exp\left(-\frac{2N t^2}{C^2}\right)$, for all $i\leqslant K $, and for $t$ small enough, 
\begin{align}
    \eta_{i}\leqslant \frac{C}{N}+t+\gamma\sum_{j=0}^{i-1}\left(\eta_j+\epsilon\right).
\end{align}
Therefore (by induction)
\begin{align}
    \eta_i\leqslant u_i
\end{align}
with 
\begin{align}
    u_i = \frac{C}{N}+t+\gamma\sum_{j=0}^{i-1}\left(u_j+\epsilon\right), \quad u_1 = \frac{C}{N}+t+\gamma\epsilon . 
\end{align}
Hence
\begin{align}
    u_i-u_{i-1} = \gamma (u_{i-1}+\epsilon)
\end{align}
therefore
\begin{align}
    u_i  = u_{i-1}(1+\gamma) +\gamma\epsilon,
\end{align}
therefore, recognizing an arithmetico-geomretic sequence, 
\begin{align}
    u_i =(1+\gamma)^{i-1}\left(\frac{C}{N}+t+\gamma\epsilon +\epsilon\right)-\epsilon.
\end{align}
Therefore, with probability at least $1 - \delta- 2K\exp\left(-\frac{2N t^2}{C^2}\right)$,  and for $t$ small enough,
\begin{align}
    \eta_i\leq (1+\gamma)^{i-1}\left(\frac{C}{N}+t+\gamma\epsilon +\epsilon\right)-\epsilon.
\end{align}
We conclude: 
for any $c\in\mathcal{C}$, 
\begin{align}
    \mid s_{N}^c-\hat{s}_{N}^c\mid\leq (1+\gamma)^{K}\left(\frac{C}{N}+t+\gamma\epsilon +\epsilon\right).
\end{align}
\end{proof}
\end{document}